\documentclass[11pt, letterpaper]{article}

\usepackage{arxiv}

\usepackage{times}
\usepackage[colorlinks=true, linkcolor=purple]{hyperref}
\usepackage{amsthm}
\usepackage{amsmath,amssymb,amsfonts}
\usepackage{helvet}
\usepackage{graphicx}
\usepackage{enumerate}
\usepackage{pgfplots}
\usepackage{mathtools,relsize}
\usepackage{float}
\usepackage{subfigure}
\usepackage{subcaption}
\usepackage{latexsym}
\usepackage{setspace}
\singlespacing

\newtheorem{theorem}{Theorem}
\newtheorem{definition}{Definition}
\newtheorem{lemma}{Lemma}
\newtheorem{corollary}{Corollary}
\newtheorem{fact}{Fact}

\DeclareMathOperator{\interior}{int}

\usepackage{biblatex} 
\addbibresource{ref.bib} 

\begin{document}
\title{Revisit the Arimoto-Blahut algorithm: New Analysis with Approximation}

\author{
    Michail Fasoulakis${}^{\dagger}$ \\ \texttt{Michail.Fasoulakis@rhul.ac.uk} \And Konstantinos Varsos${}^{\ddagger}$ \\ \texttt{varsosk@uoc.gr} \And Apostolos Traganitis${}^{\S}$ \\
    \texttt{tragani@ics.forth.gr} \and \\
     ${}^{\dagger}$Department of Computer Science, \\Royal Holloway, University of London, UK  \\
     ${}^{\ddagger}$Department of Mathematics and Applied Mathematics, \\
    University of Crete, Greece  
    \\
    ${}^{\S}$Institute of Computer Science,\\ 
 Foundation for Research and Technology-Hellas (FORTH)
}

\date{}

\maketitle
\begin{abstract}
By the seminal paper of Claude Shannon \cite{Shannon48}, the computation of the capacity of a discrete memoryless channel has been considered as one of the most important and fundamental problems in Information Theory. Nearly 50 years ago, Arimoto and Blahut independently proposed identical algorithms to solve this problem in their seminal papers \cite{Arimoto1972AnAF, Blahut1972ComputationOC}. The Arimoto-Blahut algorithm was proven to converge to the capacity of the channel as $t \to \infty$, with a convergence rate upper bounded by $O\left(\log(m)/t\right)$, where $m$ is the size of the input distribution. Under the assumption that a unique optimal solution is in the interior of the input probability simplex, the convergence becomes inverse exponential after an iteration $t^0$ \cite{Arimoto1972AnAF}. More recently, it was demonstrated in \cite{Nakagawa2020AnalysisOT} that in certain specific cases, the convergence rate is at worst case inverse linear.\\
In this paper, we revisit this fundamental algorithm analyzing its rate of convergence focusing on the approximation of the capacity. Our main result shows that the convergence rate to an $\varepsilon$-optimal solution, for any sufficiently small constant $\varepsilon > 0$, is inverse exponential $O\left(\log(m)/c^t\right)$, for some constant $c > 1$. Given this, we derive new and complementary results for the computation of capacity, particularly in cases where an exact solution is sought.
\end{abstract}

\section{Introduction}\label{sec:Introduction}

In Information theory \cite{Cover2006}, the Arimoto-Blahut algorithm stands as a cornerstone method for efficiently computing the capacity of discrete memoryless channels and solving related optimization problems. Originally introduced by Suguru Arimoto and Richard Blahut in the 1970s \cite{Arimoto1972AnAF,Blahut1972ComputationOC}, it has been celebrated for its convergence properties and practical applicability in diverse communication systems. In particular, a series of input distributions is computed by a recurrence formula produced by an alternating optimization approach, converging to the input probability distribution that achieves the channel capacity. In the paper of Arimoto (see also \cite{SutterSEL15,Jurgensen1984ANO}), this algorithm was proved to have an upper bound of the rate of convergence equals to $O\Big(\log(m)/{t}\Big)$. However, it is an open question if a better upper bound holds for the rate of convergence.  

Recently, this algorithm has been scrutinized heavily in \cite{Nakagawa2020AnalysisOT}, where the authors perform Taylor expansions and eigenvalues' analysis of a Jacobian matrix to determine the conditions for the convergence rates of the algorithm to an optimal solution, showing that the algorithm in some cases converges with exponential, but in other cases with inverse linear convergence rate under some specific conditions.  Furthermore, in another recent work by \cite{SutterSEL15}, utilizing the dual formulation for computing capacity, an a priori error bound of $O\left({n\sqrt{\log(m)}}/{t}\right)$ has been demonstrated. This results in overall computational complexity of $O\left({mn^2 \sqrt{\log(m)}}/{\varepsilon}\right)$ for finding an $\varepsilon$-solution. 

In this paper, we take a different point of view, observing that when the Arimoto-Blahut algorithm is initialized in the interior of the probability simplex, the sequence of probability distributions it generates converges at an inverse exponential rate to an approximate optimal solution from the interior, for any constant sufficiently small approximation parameter $\varepsilon$. The key intuition behind our approach is to partition the convergence domain into two regions depending on the constant approximation $\varepsilon$ of the capacity. Specifically, we define a region $R$ around the convex set of the optimal solutions, which contains all $\varepsilon$-optimal input distributions, and its complement. This partitioning enables us to prove that for any constant $\varepsilon > 0$, the algorithm converges to region $R$ at an inverse exponential rate. In other words, the Arimoto-Blahut algorithm reaches a constant approximation solution of the capacity exponentially fast for any sufficiently small constant approximation. Furthermore, applying tools derived from this approach, we further demonstrate that the same exponential bound applies to the convergence toward the exact optimal solution in some specific cases. 
 
Our contributions in a more technical perspective are summarized as follows:
\begin{itemize}
    \item First of all, the sequence of probability distributions generated by the Arimoto-Blahut algorithm converges to an $\varepsilon$-optimal solution for any sufficiently small constant $\varepsilon > 0$. As long as the current iterate is not an $\varepsilon$-optimal solution, the convergence rate toward such a solution is inverse exponential (geometric), given by $O\left({\log(m)}/{c^t} \right)$ for some constant $c>1$ (Theorem \ref{thm:rate of convergence}). 
       
    Importantly, the constant $c$ depends only on the constant approximation $\varepsilon$ and the problem size $m$, and not on specific properties of the channel. This stands in contrast to other geometric convergence results in the literature, such as those in \cite{Arimoto1972AnAF,Nakagawa2020AnalysisOT}, where the convergence rate may depend on channel-specific quantities, for example, the eigenvalues of a Jacobian matrix derived from the channel.
    \item If the region of the optimal solutions has a strictly positive constant volume, then the same convergence bounds apply for achieving an optimal solution (Theorem \ref{thm: converge to exact solution in the interior}).
    \item We complement the results of \cite{Arimoto1972AnAF,Nakagawa2020AnalysisOT} (Theorem 3 and Theorem 5, respectively) for geometric rates under specific conditions, giving global geometric rates (Theorem \ref{thm:new AB} and \ref{thm:new Nakagawa}, respectively). 
\end{itemize}

These contributions represent a significant improvement not only over the previously established upper bounds of $O\left({\log(m)}/{t}\right)$ for the convergence rate of Arimoto-Blahut algorithm, but also compared to the bounds of $O\left({\sqrt{\log(m)}}/{t}\right)$, of the algorithm derived in \cite{SutterSEL15}. 
To the best of our knowledge, the Arimoto-Blahut algorithm remains a state of art algorithm of this problem, although the problem of computing the capacity is a convex programming problem, so maybe classical methods of this can potentially be applied. 

\subsection{Further Related work}

Different versions of the Arimoto-Blahut algorithm are studied in the bibliography. Namely, authors in \cite{Sayir2000IteratingTA} studied the case where a small number of the input symbols are assigned non-zero probabilities in the capacity-achieving distribution, but no theoretical results regarding the speed up and the convergence rate are provided. In \cite{Yu2009SqueezingTA}, accelerated versions of the classical Arimoto-Blahut algorithm are provided by parametrizing the recurrence formula of the classical algorithm, and their work demonstrated that the convergence is at least as fast as the Arimoto-Blahut algorithm \cite{Arimoto1972AnAF}. Moreover, authors in \cite{Chen2023ACB} introduced a modification to the classical Arimoto-Blahut algorithm and examined the dual problem of computing the channel capacity, known as the rate-distortion problem. They reported a convergence rate of at least $O(1/t)$ and derived that their algorithm requires at least ${2\log(n)}/{\varepsilon}$ iterations. This results in $O\left(\frac{mn \log(n)}{\varepsilon} \left(1 + \log|\log(\varepsilon)|\right)\right)$ arithmetic operations to achieve an $\varepsilon$-optimal solution. Further, in \cite{Matz2004InformationGF} the authors provided a modified version of the Arimoto-Blahut algorithm utilizing natural gradient and proximal point methods, and they proved the convergence of their algorithm, but their convergence rate analysis is restricted only to the worst case, which is $O(1/t)$. There have been many variants of the Arimoto-Blahut algorithm introduced over the years, such as \cite{Vontobel2008AGO}, without improving the convergence rates of the vanilla algorithm.

\section{Preliminaries}

We consider a discrete memoryless channel $X\to Y$, where $X$ is the input of size $m$ and $Y$ of size $n$ is the output random variables, with $\{x_1,x_2,\dots,x_m\}$ being the possible symbols of the input and $\{y_1,y_2,\dots,y_n\}$ being the possible symbols of the output. In this paper, we consider that $m,n$ are arbitrarily large.

Given a probability distribution, column vector, $p \in \Delta^{m-1}$, where $\Delta^{m-1}$ is the $m-1$-dimensional simplex space, the support of $p$ is the set of indexes with strictly positive probability. Furthermore, we consider the discrete channel matrix $W$ with $w_{ij} \in [0,1]$, where $w_{ij}$ is the conditional probability of the $j$ output symbol when the input symbol $i$ is transmitted, with $\sum_{j \in [n]} w_{ij} =1$ for any possible $i \in [m] = \{1,2,\dots, m\}$. Thus, for any probability distribution $p$ of the input, the output probability distribution is $q^{\top} = p^{\top}\cdot W\in \Delta^{n-1}$. 

Further, the Shannon entropy \cite{Shannon48}, for a distribution $p$ is defined as $H(p) = - \sum_{i \in [n]} p_i \cdot \log(p_i)$, the mutual information given the channel $W$ and an input probability $p$ is defined as 
\begin{equation*}\label{eq:mutual information}
    I(p, W) = \sum_{i \in [m]} \sum_{j \in [n]} p_i \cdot w_{ij} \log\left(\frac{w_{ij}}{\sum_{k \in [m]} p_k \cdot w_{kj}}\right),
\end{equation*}
and the channel capacity $C^*$ is defined as $C^* = \max_{p \in \Delta^{m-1}} I(p, W)$. In other words, for any given input probability, we can calculate the mutual information that is no greater than the maximum capacity $C^*$ that is achieved by an input probability $p^*$. In general, this optimal distribution is not unique, and as was proven in \cite{Arimoto1972AnAF}, any convex combination of optimal solutions is also an optimal solution. Thus, the set of optimal solutions forms a convex subset of $\Delta^{m-1}$.

The \emph{Kullback-Leibler} (KL) divergence $D_{KL}(p || q)$ for two distributions $p, q \in \Delta^{m-1}$ is defined by $D_{KL}(p || q) = \sum_{i \in [m]} p_i \log\Big(\tfrac{p_i}{q_i}\Big)$, with the convention that $p_i \log\Big(\tfrac{p_i}{q_i}\Big) = 0$ whenever $p_i = 0$. 

\noindent By the necessary and sufficient KKT conditions for an optimal input distribution $p^* = (p^*_1, \ldots , p^*_m ) \in \Delta^{m-1}$ that achieves the capacity of the channel, see \cite{Jelinek1968ProbabilisticIT}, we have that:
\begin{equation*}
    D(W^i||q^*) \left\{ \begin{array}{ll}
        = C^*, & p^*_i>0, \\
        \leq C^*, & p^*_i=0,\\
    \end{array} \right.
\end{equation*}
where $(q^*)^{\top}= (p^*)^{\top}\cdot W$ and $W^i$ denotes the $i$-th row of the channel matrix $W$. In other words, for any index $i$ in the support of the optimal solution $p^*$, then $D(W^i||q^*) = C^*$. However, it is possible for an index $i$ that is not in the support of $p^*$ to achieve the capacity too. Based on this characterization, the authors in \cite{Nakagawa2020AnalysisOT} classify the indexes $[m]$ into three types:

\begin{itemize}
\item Type I: For $i\in [m]$ such that $D(W^i||q^*) = C^*$ and $p^*_i>0$,
\item Type II: For $i\in [m]$ such that $D(W^i||q^*) = C^*$ and $p^*_i = 0$,
\item Type III: For $i\in [m]$ such that $D(W^i||q^*) < C^*$ and $p^*_i = 0$.
\end{itemize}
For notational convenience, we denote as $m_I$, $m_{II}$, and $m_{III}$ the set of indexes of Type I, Type II, and Type III, respectively, at an optimal solution $p^*$. 

Finally, given a set $S$ we denote as $\interior S$ and $\partial S$ the interior and the boundary sets of $S$, respectively. We define the $\delta$-ball around a $p \in S$ as $B_S(p, \delta) := \{q \in S \mid \|p - q\|_1 \leq \delta\}$ (or simply as $B(p, \delta)$ when the set is obvious from the context). 

The computation of the capacity of a discrete memoryless channel is given by the seminal Arimoto-Blahut algorithm \cite{Arimoto1972AnAF,Blahut1972ComputationOC}, which we describe in following section.

\subsection{Description of the Arimoto-Blahut algorithm}

At the original Arimoto-Blahut algorithm, the time is split into discrete slots $t \in \mathbb{N}_0$. In \cite{Arimoto1972AnAF}, at any iteration $t+1$, the approximate channel capacity, $C(t+1, t)$, can be calculated, given the probability distribution $p^{t}$ from the previous iteration. Formally, this can be written as follows
\begin{equation}\label{eq: capacity at t and t - 1}
\begin{split}
    C(t+1,t) = - \sum_{i \in [m]} p^{t+1}_i \cdot \log(p^{t+1}_i) + \sum_{i \in [m]} \sum_{j \in [n]}  p^{t+1}_i \cdot w_{ij} \cdot \log \left( \frac{p^{t}_i \cdot w_{ij}}{\sum_{k \in [n]} p^{t}_k \cdot w_{kj} }\right).
\end{split}
\end{equation}
At each iteration $t+1$, in order to compute the $C(t+1,t)$, the new probability distribution $p^{t+1}$, given the previous distribution $p^{t}$, is computed using the following recurrence formula, see \cite{Arimoto1972AnAF,Nakagawa2020AnalysisOT},
\begin{equation}\label{eq:Arimoto-Blahut}
\begin{split}
    p^{t+1}_i = \frac{p^{t}_i \cdot e^{\sum_{j \in [n]} w_{ij} \cdot  \log\left(\frac{w_{ij}}{\sum_{k \in [m]} p^{t}_k \cdot w_{kj}}\right)}}{\mathlarger{\sum}\nolimits_{\ell \in [m]} p^{t}_{\ell} \cdot e^{\sum_{j \in [n]} w_{\ell j} \cdot  \log\left(\frac{w_{\ell j} }{\sum_{k \in [m]} p^{t}_k \cdot w_{kj}}\right)}}, \quad \text{for any } i \in [m],
\end{split}
\end{equation}
where $p^0$ is a full support distribution.

\subsection{Approximation of the capacity}\label{sec:approximate Arimoto Blahut}

Similar to \cite{Arimoto1972AnAF}, let the function $f(p):\Delta^{m-1} \to [0,C^*]$ with $f(p^{t}) = C^* - C(t+1, t)$, representing the distance/error from the capacity at iteration $t+1$, such that the problem of the computation of the capacity of the channel becomes a minimization problem of the function $f$, where the minimum value is $f(p^*)=0$, for any optimal distribution $p^*$. Now, we have the following approximation of the function $f$. 

\begin{definition}[Approximate optimum solution]
An input distribution $p$ is an $\varepsilon$-optimum solution, for any $\varepsilon\geq 0$, if and only if,
\begin{equation*}
    f(p)\leq \varepsilon.
\end{equation*}
\end{definition} 
\noindent It is easy to see that when $\varepsilon = 0$, then we have an (exact) optimum solution.
 Moreover, in this paper, we define two types of convergence rates for the function $f$:

\begin{definition}[Inverse linear rate]
The function $f$ has inverse linear rate of convergence, as $f(p^t) \to \varepsilon$ from the above, for any $\varepsilon\in [0,C^*]$, if and only if,
\begin{equation*}
    f(p^t) = O(1/t),
\end{equation*}
for any $t \in \mathbb{N}$.
\end{definition}

\begin{definition}[Inverse exponential rate]
The function $f$ has inverse exponential rate of convergence, as $f(p^t) \to \varepsilon$ from the above, for any $\varepsilon\in [0,C^*]$, if and only if,
\begin{equation*}
    f(p^t) = O(1/c^t),
\end{equation*}
for any $t \in \mathbb{N}$, with $c>1$.
\end{definition} 

\noindent Given these definitions, the previous main results from the bibliography on the convergence rate of the Arimoto-Blahut algorithm can be summarized as follows:

\begin{corollary}\label{cor:Arimoto-Blahut rate of convergence} [Corollary 1 in \cite{Arimoto1972AnAF}] The Arimoto-Blahut algorithm has rate of convergence for $f(p^t)$ at most $O\Big({\log(m)}/{t}\Big)$ to converge to an optimal solution.
\end{corollary}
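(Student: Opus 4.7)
The plan is to follow the classical telescoping argument in Arimoto's original paper, which ties the progress of the iterates directly to the KL divergence between $p^*$ and the current iterate $p^t$.

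First, I would rewrite the recurrence (\ref{eq:Arimoto-Blahut}) using the KL-divergence notation: letting $q^t$ denote the output distribution induced by $p^t$ (so $(q^t)^T = (p^t)^T W$), the update reads
\begin{equation*}
  p^{t+1}_i \;=\; \frac{p^t_i\, e^{D(W^i\|q^t)}}{Z_t}, \qquad Z_t := \sum_{\ell \in [m]} p^t_\ell\, e^{D(W^\ell\|q^t)},
\end{equation*}
and a direct computation starting from (\ref{eq: capacity at t and t - 1}) shows that $C(t+1,t) = \log Z_t$. Taking logs of the update gives $\log(p^t_i/p^{t+1}_i) = \log Z_t - D(W^i\|q^t)$.

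Second, I would substitute this into the identity (\ref{eq:KL divergence difference}). The key step is the decomposition
\begin{equation*}
  \sum_{i \in [m]} p^*_i\, D(W^i\|q^t) \;=\; \sum_{i \in [m]} p^*_i\, D(W^i\|q^*) \;+\; D(q^*\|q^t) \;=\; C^* + D(q^*\|q^t),
\end{equation*}
where the first equality follows by adding and subtracting $\log q^*_j$ inside the sum, and the second uses the KKT characterization of $p^*$ (the sum is $I(p^*,W)=C^*$ on the support, and the Type~II/III indices contribute zero because $p^*_i=0$ there). Plugging in yields
\begin{equation*}
  D_{KL}(p^*\|p^{t+1}) - D_{KL}(p^*\|p^t) \;=\; \log Z_t - C^* - D(q^*\|q^t) \;=\; -\,f(p^t) \;-\; D(q^*\|q^t).
\end{equation*}
Since $D(q^*\|q^t)\geq 0$, this gives the one-step bound $f(p^t) \le D_{KL}(p^*\|p^t) - D_{KL}(p^*\|p^{t+1})$.

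Third, I would telescope this inequality from $t=0$ to $T-1$, discard the non-negative term $D_{KL}(p^*\|p^T)$, and use the uniform initialization $p^0_i = 1/m$ to bound the initial divergence: $D_{KL}(p^*\|p^0) = \log m - H(p^*) \le \log m$. This yields $\sum_{t=0}^{T-1} f(p^t) \le \log m$. Finally, invoking the monotonicity $f(p^{t+1}) \le f(p^t)$ (which is standard for the Arimoto-Blahut iteration, following from the fact that each update is the exact maximizer of the alternating-optimization surrogate in the chosen coordinate block), I conclude $T\cdot f(p^T) \le \log m$, i.e.\ $f(p^T) = O(\log(m)/T)$.

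The main obstacle is verifying the two non-trivial ingredients cleanly: (i) the identification $C(t+1,t) = \log Z_t$, which requires patiently expanding (\ref{eq: capacity at t and t - 1}) together with the recurrence and collecting the entropy and cross-entropy terms; and (ii) the monotonicity $f(p^{t+1}) \le f(p^t)$, which is the step where one must justify that the alternating maximization view of the algorithm produces a non-decreasing sequence of capacity approximants. Every other step is an algebraic rearrangement or a direct application of Gibbs' inequality.
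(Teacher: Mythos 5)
Your proof is correct and follows the classical Arimoto argument. The paper itself states this as a cited corollary without reproving it, but the central identity you derive,
\begin{equation*}
D_{KL}(p^*\|p^{t+1}) - D_{KL}(p^*\|p^t) = -f(p^t) - D(q^*\|q^t),
\end{equation*}
is exactly equation \eqref{Difference of KL- f}, which the paper establishes inside the proof of Theorem~\ref{thm:convergence of Kullback divergences}; and the paper's remark immediately after that theorem telescopes the same identity to obtain the $\log(m)/\varepsilon$ iteration bound, so you and the paper are walking the same path. The one genuine extra ingredient you need that the paper does not touch is the monotonicity $f(p^{t+1}) \le f(p^t)$, needed to pass from $\sum_{t<T} f(p^t) \le \log m$ to $T\cdot f(p^T) \le \log m$; you correctly flag this as nontrivial, and it is indeed a separate theorem in Arimoto's original paper (the alternating-maximization interpretation). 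Two small notes: the identity $\sum_i p^*_i D(W^i\|q^*) = I(p^*,W) = C^*$ holds by the definition of mutual information and does not actually require invoking the KKT conditions (terms with $p^*_i = 0$ vanish trivially regardless of whether they are Type~II or III); and the bound $D_{KL}(p^*\|p^0) \le \log m$ requires the standard uniform initialization $p^0_i = 1/m$, which you correctly assume but which should be stated as a hypothesis since the paper's theorems elsewhere only assume full support.
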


\noindent This corollary implies that the Arimoto-Blahut algorithm requires at most $O\left({\log(m)}/{\varepsilon} \right)$ iterations to reach an $\varepsilon$-optimal solution \cite{SutterSEL15,Jurgensen1984ANO}.\\

\begin{theorem}\label{thm:Unique solution in interior} [ Theorem 3 in \cite{Arimoto1972AnAF}] Consider a unique optimal solution in the interior of the simplex, then there is an integer $N\geq 0$ s.t. the Arimoto-Blahut algorithm has inverse exponential rate of convergence to achieve this optimal solution, after $t \geq N$.\\
\end{theorem}

\begin{theorem}\label{thm:Nakagawa2020 m_II empty} [Theorem 5 in \cite{Nakagawa2020AnalysisOT}] Consider a unique optimal solution with $m_{II} = \emptyset$, then there is a $\delta>0$ s.t. for arbitrary initial distribution $p^0$, with $\| p^0- p^*\|\footnote{For some $r$-norm with $r \in \mathbb{N}$.}<\delta$,
the Arimoto-Blahut algorithm has inverse exponential rate of convergence\footnote{For the capacity or the distance/norm of the current distribution from the optimum. We can consider these two metrics as "equivalent".}
\begin{equation*}
    \|p^t - p^*\| < K \cdot \theta^t,
\end{equation*}
to achieve this optimal solution, for any $t \geq0$, with $K>0$ and $0\leq \theta <1$.
\end{theorem}
\noindent Note that, to our understanding, this result implies that the initial distribution $p^0$ should be arbitrarily close to the optimal solution, since there is no explicit guarantee on the value of $\delta$, which may be arbitrarily small. To address this limitation, in Section \ref{Exact solutions}, we provide a complementary analysis. In particular, building on Theorem \ref{thm:Nakagawa2020 m_II empty} (corresponding to Theorem 5 in \cite{Nakagawa2020AnalysisOT}), we establish a global inverse exponential rate of convergence under this setting. Finally, recently there was the following result.

\begin{theorem}\label{thm:Nakagawa2020} [Theorem 9 in \cite{Nakagawa2020AnalysisOT}] If $m_{II} \neq \emptyset$ in the unique optimal solution, then there is an initial distribution $p^0$ with $p^0_i>0$, for any $i\in m_{II}$, s.t. the Arimoto-Blahut algorithm has inverse linear rate of convergence to this optimal solution as $t \to \infty$.
\end{theorem}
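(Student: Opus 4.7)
The plan is to linearize the Arimoto--Blahut update around the unique optimum $p^*$ and track the three index blocks $m_I$, $m_{II}$, $m_{III}$ separately. Rewrite the update (\ref{eq:Arimoto-Blahut}) as $T(p)_i = p_i\,e^{F_i(p)}/Z(p)$ where $F_i(p) = \sum_{j} w_{ij}\log(w_{ij}/q_j)$ with $q_j = (p^\top W)_j$ and $Z(p) = \sum_\ell p_\ell e^{F_\ell(p)}$. The KKT description labels each coordinate by the sign of $F_i(p^*) - C^*$: zero on $m_I \cup m_{II}$ and strictly negative on $m_{III}$, while $Z(p^*) = e^{C^*}$.

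First I would dispose of the easy blocks. For $i \in m_{III}$ the multiplicative factor $e^{F_i(p^t) - \log Z(p^t)}$ tends to $e^{F_i(p^*) - C^*} < 1$, so these coordinates contract to $0$ geometrically. For the $m_I$ block the argument of Theorem \ref{thm:Nakagawa2020 m_II empty} applies locally: the Jacobian restricted to the tangent space of the $m_I$ face has spectrum strictly inside the unit disk, giving exponential convergence. The only source of slow convergence is therefore the $m_{II}$ block, where $F_i(p^*) = C^*$ forces the linear multiplier to be exactly $1$, and the decay to $0$ is driven by second-order terms.

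The main step is a careful second-order Taylor expansion on $m_{II}$. Choose $p^0 \in \Delta^o(X)$ with $p^0_i > 0$ for $i \in m_{II}$; the monotone decrease of $D_{KL}(p^* \,\|\, p^t)$ read off from (\ref{eq:KL divergence difference}) confines the trajectory to a small neighborhood of $p^*$ on which the Taylor expansion is valid. Using $\partial F_i/\partial p_k = -\sum_j w_{ij}w_{kj}/q_j$ and expanding both $F_i(p^t) - C^*$ and $\log Z(p^t) - C^*$ to second order in $p^t - p^*$, one obtains for the aggregate $\epsilon^t := \sum_{i \in m_{II}} p^t_i$ a scalar recurrence of the form $\epsilon^{t+1} = \epsilon^t - \alpha(\epsilon^t)^2 + o((\epsilon^t)^2)$, with $\alpha > 0$ determined by the Hessian of $I(\cdot,W)$ transverse to the optimal face. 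The standard comparison trick (divide by $\epsilon^t\epsilon^{t+1}$ and telescope) yields $\epsilon^t = \Theta(1/t)$, and the relation between $C(t+1,t)$ and $I(p^{t+1},W)$ combined with smoothness of $I$ transfers this to $f(p^t) = O(1/t)$.

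The hard part will be showing that the effective coefficient $\alpha$ in the reduced scalar dynamics is strictly positive: this is exactly what ensures the rate is $1/t$ rather than something slower. Positivity should follow from strict concavity of $I(\cdot,W)$ in directions leaving the optimal face, combined with $F_i(p^*) = C^*$ for $i \in m_{II}$, but the sign tracking of the Hessian entries is delicate. A secondary subtlety is decoupling: one must verify that the $m_I$ and $m_{III}$ contributions, being exponentially close to their limits, are of strictly smaller order than $1/t$ after finitely many iterations, so that they can be absorbed into the $o((\epsilon^t)^2)$ remainder of the scalar recurrence.
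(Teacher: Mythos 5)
This theorem is not one the paper proves: it is a cited background result (Theorem 9 of Nakagawa et al.), reproduced without proof in the preliminaries alongside Corollary~\ref{cor:Arimoto-Blahut rate of convergence} and Theorem~\ref{thm:Nakagawa2020 m_II empty}. So there is no "paper's own proof" to compare your attempt against; what you have written is a reconstruction of an argument from the cited source.

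Your reconstruction does capture the recognized strategy from that source --- linearize the Arimoto--Blahut map around $p^*$, observe that the $m_{III}$ directions contract by a factor bounded away from $1$, that the $m_I$-face dynamics fall under the exponential regime of Theorem~\ref{thm:Nakagawa2020 m_II empty}, and that the $m_{II}$ block is the degenerate eigenvalue-one subspace where one must go to second order to extract the $\Theta(1/t)$ rate. That much is a faithful outline. However, two issues go beyond the gaps you flag. First, the "decoupling" you defer to at the end is not a secondary subtlety but the crux: once $\epsilon^t := \sum_{i\in m_{II}} p^t_i$ decays only as $1/t$, the coupling terms in the update drive the $m_I$ displacements $p^t_i - p^*_i$ at the \emph{same} $O(1/t)$ order, not at an exponentially small order, so they cannot simply be "absorbed into $o((\epsilon^t)^2)$." The correct resolution is a center-manifold / slaving argument in which the fast variables are parametrized by the slow one, which requires more than treating the blocks independently. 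Second, the theorem is existential --- it asserts \emph{some} initialization with $p^0_i>0$ on $m_{II}$ exhibits the $1/t$ rate --- whereas your sketch argues a universal claim for all such initializations near $p^*$; the existential form is exactly there because the sign of the second-order coefficient (your $\alpha$) is not unconditionally of one sign across all approach directions, and the cited proof selects an initialization whose trajectory approaches through the slow $m_{II}$ directions. Your final paragraph acknowledges the sign issue but treats it as a technicality to be checked rather than as the reason the statement is existential rather than universal.
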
  

\section{Approximation analysis of the Arimoto-Blahut algorithm}\label{Approximation analysis}

In our paper, we study the rate of convergence of the function $f$ until the algorithm achieves an $\varepsilon$-optimal solution, for an $\varepsilon$ sufficiently small constant. To do this, first we have the following straightforward observations.

\begin{lemma}
For any optimum solution $p^*$ and any sufficiently small $\varepsilon>0$, there is a $\delta(\varepsilon)$ and a set of probability distributions $S \in \interior \Delta^{m - 1}$, s.t.  for any $p \in S \subseteq B(p^*,\delta(\varepsilon))$ it holds that $f(p) \leq \varepsilon$.
\end{lemma}
\begin{proof}
The proof is straightforward by the continuity of the function $f$.

\end{proof}

Denoting the interior of the set of $\varepsilon$-optimal solutions as $\interior B(p^*,\delta(\varepsilon))$, for any $p^*$ optimum solution, the previous Lemma implies that for any optimal solution $p^*$, even for solutions on the boundary $\partial \Delta^{m-1}$, that is with $m_{II}\neq \emptyset$, there is a set of approximate solutions $\interior B(p^*,\delta(\varepsilon))$ with $m_{II}= \emptyset$ and $f(p) \leq \epsilon$ for any $p\in \interior B(p^*,\delta(\varepsilon))$.

Let an arbitrary optimal solution $p^*$ from the convex set of the optimal solutions and $p^t$ the sequence of the input distributions that are generated by the Arimoto-Blahut algorithm \eqref{eq:Arimoto-Blahut}. The KL divergence between the optimum solution $p^*$ and $p^t$ is the following
\begin{equation*}
\begin{split}
D_{KL}(p^* || p^t) = \sum_{i \in [m]} p^*_i \log\Big(\tfrac{p^*_i}{p^t_i}\Big).
\end{split}
\end{equation*}

Then, the difference of the KL divergences between $p^*$ and two consecutive probability distributions $p^t$ and $p^{t+1}$, in iterations $t$ and $t+1$, respectively, is given by
\begin{equation}\label{eq:KL divergence difference}
    D_{KL}(p^* || p^{t+1}) - D_{KL}(p^* || p^t) = \sum\nolimits_{i \in [m]} p^*_i \log\left(\frac{p^t_i}{p^{t+1}_i}\right), \quad \text{for } t \geq 0.
\end{equation}

\begin{lemma}
\label{lemma:Difference of KL- f}
Consider the sequence of distributions $\{p^t\}_{t \in \mathbb{N_0}}$ generated by the Arimoto-Blahut algorithm, as described at equation \eqref{eq:Arimoto-Blahut}, starting from a full support uniform initial distribution $p^0$. Then, it holds that
\begin{equation}\label{Difference of KL- f}
\begin{split}
D_{KL}(p^*||p^{t+1}) - D_{KL}(p^*||p^{t}) = - f(p^t)- D_{KL}(q^*||q^t).
\end{split}
\end{equation}
\end{lemma}
\begin{proof}
The proof of this Lemma closely follows the approach used in equations (30) - (32) in the proof of Theorem 1 in \cite{Arimoto1972AnAF}, but we present it again for the sake of completeness of the analysis in our context.
 Thus, we have that
\begin{equation*}\label{eq:thm3eq1}
    D_{KL}(p^*||p^{t+1}) - D_{KL}(p^*||p^{t}) = \sum\nolimits_{i \in [m]} p^*_i \log\left(\frac{p^t_i}{p^{t+1}_i}\right).
\end{equation*}
Note here that since the initial probability has full support and the update rule, that is equation (\ref{eq:Arimoto-Blahut}), of the Algorithm is multiplicative keeping $p^t_i$ always positive for any $i$. In other words, the sequence generated by the Algorithm belongs to $\interior \Delta^{m-1}$.\\

\noindent Substituting now the Arimoto–Blahut update for $p^{t+1}_i$, the right-hand side becomes,
\begin{equation*}
    \sum_{i \in [m]} p^*_i \cdot  \log\left(\frac{p^t_i \cdot \mathlarger{\sum}\nolimits_{\ell \in [m]} e^{\sum_{j \in [n]} w_{\ell j} \cdot  \log\left(\frac{p^t_{\ell} \cdot w_{\ell j} }{\sum_{k \in [m]} p^t_k \cdot w_{kj}}\right)}}{p^t_i \cdot e^{ \sum_{j \in [n]} w_{ij} \cdot  \log\left(\frac{w_{ij}}{\sum_{k \in [m]} p^t_k \cdot w_{kj}}\right)}}\right).
\end{equation*}
Rearranging terms and using logarithmic properties we have,
\begin{equation*}
    \log\left(\mathlarger{\sum}_{\ell \in [m]} e^{\sum_{j \in [n]} w_{\ell j} \cdot  \log\left(\frac{p^t_{\ell} \cdot w_{\ell j} }{\sum_{k \in [m]} p^t_k \cdot w_{kj}}\right)}\right) - \sum_{i \in [m]} p^*_i \cdot  \sum_{j \in [n]} w_{ij} \cdot  \log\left(\frac{w_{ij}}{\sum_{k \in [m]} p^t_k \cdot w_{kj}}\right).
\end{equation*}
Setting $r^t_i = e^{\sum_{j \in [n]} w_{ij} \cdot  \log\left(\frac{p^t_{i} \cdot w_{ij} }{\sum_{k \in [m]} p^t_k \cdot w_{k j}}\right)}$ we take,
\begin{equation*}
    D_{KL}(p^*||p^{t+1}) - D_{KL}(p^*||p^{t})
    = \log\left(\sum_{i \in [m]} r^t_i\right) - \sum_{i \in [m]} p^*_i \cdot  \sum_{j \in [n]} w_{ij} \cdot  \log\left(\frac{w_{ij}}{\sum_{k \in [m]} p^t_k \cdot w_{kj}}\right).
\end{equation*}
Using that $C(t+1,t) = \log \Big(\sum\nolimits_{i \in [m]} r^t_i \Big)$ (see Corollary 1 and equation (28) in \cite{Arimoto1972AnAF}) and rearranging terms, the KL divergences difference becomes
\begin{equation*}
\begin{split}
    D_{KL}(p^* ||p^{t+1}) & - D_{KL}(p^*||p^{t}) = C(t+1,t) - \sum_{i \in [m]} p^*_i \cdot  \sum_{j \in [n]} w_{ij} \cdot  \log\left(\frac{w_{ij}}{\sum_{k \in [m]} p^t_k \cdot w_{kj}}\right)\\
    &= C(t+1,t) - \sum_{i \in [m]} p^*_i \cdot  \sum_{j \in [n]} w_{ij} \cdot  \log\left(\frac{w_{ij} \cdot \sum_{k \in [m]} p^*_k \cdot w_{kj}}{\left(\sum_{k \in [m]} p^t_k \cdot w_{kj}\right)\cdot \left(\sum_{k \in [m]} p^*_k \cdot w_{kj}\right)}\right)\\ 
    & = C(t+1,t) - \sum_{i \in [m]} p^*_i \cdot  \sum_{j \in [n]} w_{ij} \cdot  \log\left(\frac{w_{ij}}{\sum_{k \in [m]} p^*_k \cdot w_{kj}}\right) \\
    & \hspace{13em} - \sum_{i \in [m]} p^*_i \cdot  \sum_{j \in [n]} w_{ij} \cdot  \log\left(\frac{ \sum_{k \in [m]} p^*_k \cdot w_{kj}}{\sum_{k \in [m]} p^t_k \cdot w_{kj}}\right).\\
\end{split}
\end{equation*} Now, since $C^* = \sum_{i \in [m]} p^*_i \cdot  \sum_{j \in [n]} w_{ij} \cdot  \log\left(\frac{w_{ij}}{\sum_{k \in [m]} p^*_k \cdot w_{kj}}\right)$ we have that

\begin{equation*}
    D_{KL} (p^*||p^{t+1}) - D_{KL}(p^*||p^{t})  = C(t+1,t) - C^* - \sum_{i \in [m]}  \sum_{j \in [n]} p^*_i \cdot  w_{ij} \cdot  \log \left(\frac{\sum_{k \in [m]} p^*_k \cdot w_{kj}}{\sum_{k \in [m]} p^t_k \cdot w_{kj}} \right).\\
\end{equation*}
The output of the communication channel is $q^t_j = \sum_{k \in [m]} p^t_k \cdot w_{kj}$, hence
\begin{equation*}
    \sum_{i \in [m]}  \sum_{j \in [n]} p^*_i \cdot  w_{ij} \cdot  \log \left(\frac{\sum_{k \in [m]} p^*_k \cdot w_{kj}}{\sum_{k \in [m]} p^t_k \cdot w_{kj}} \right) = \sum_{j \in [n]}  \sum_{i \in [m]} p^*_i \cdot  w_{ij} \cdot  \log \left(\frac{\sum_{k \in [m]} p^*_k \cdot w_{kj}}{\sum_{k \in [m]} p^t_k \cdot w_{kj}} \right) = D_{KL}(q^*||q^t).
\end{equation*}
Therefore,
\begin{equation*}
    D_{KL}(p^*||p^{t+1}) - D_{KL}(p^*||p^{t}) = C(t+1,t) - C^* - D_{KL}(q^*||q^t)  = - f(p^t)- D_{KL}(q^*||q^t).
\end{equation*}
\end{proof}

\begin{theorem}
\label{Difference of KL}
Let an arbitrary optimal solution $p^*$ and the sequence of distributions $\{p^t\}_{t \in \mathbb{N}}$ generated by the Arimoto-Blahut algorithm, starting from a full support uniform initial distribution $p^0$. Then, for each iteration $t$, if the $p^{t}$ has not yet resulted in an $\varepsilon$-optimal solution, the KL divergence $D_{KL}(p^*||p^{t})$ strictly decreases by at least $\varepsilon$ in the next step, where $\varepsilon$ is a sufficiently small positive constant.
\end{theorem}

\begin{proof}
Since we analyze the convergence to an $\varepsilon$-approximate optimal solution, we can pick an arbitrary optimal solution $p^*$ from the convex set of the optimal solutions to prove that the algorithm always converges to this set that is an $\varepsilon$-optimum solution. \\

\noindent By Lemma \ref{lemma:Difference of KL- f}, in each iteration of the algorithm we have that
\begin{equation*}
    D_{KL}(p^*||p^{t+1}) - D_{KL}(p^*||p^{t}) = - f(p^t)- D_{KL}(q^*||q^t).
\end{equation*}
Since the KL divergence is non-negative, $D_{KL}(q^*||q^t) \geq 0$, it follows that
\begin{equation*}
    D_{KL}(p^*||p^{t+1}) - D_{KL}(p^*||p^{t}) \leq - f(p^t).
\end{equation*}
Rearranging terms we take a bound for the $f(p^t)$, that is 
\begin{equation}
\label{bound of f}
f(p^t) \leq D_{KL}(p^*||p^{t}) -D_{KL}(p^*||p^{t+1}).
\end{equation}
Now suppose that at iteration $t$, the algorithm provides a $p^t$ such that $f(p^t) > \varepsilon$, we obtain
 \begin{equation}
\label{KL inequality}
\begin{split}
    D_{KL}(p^*||p^{t+1}) < D_{KL}(p^*||p^{t}) -\varepsilon.
\end{split}
 \end{equation}
Thus, the KL divergence decreases by at least a constant quantity $\varepsilon$ in each iteration $t$, and the sequence $\{p^t\}_{t \in \mathbb{N}}$ approaches the convex set of the optimal solutions from outside.
\end{proof}

\noindent We can easily see by Theorem \ref{Difference of KL} that $D_{KL}(p^*||p^{t}) - D_{KL}(p^*||p^{t+1}) > \varepsilon$, thus $D_{KL}(p^*||p^{0}) - D_{KL}(p^*||p^{t+1}) > t\cdot \varepsilon$, for any iteration $t$ such that $f(p^t)>\varepsilon$. Therefore, we can take that $D_{KL}(p^*||p^{0})> t\cdot \varepsilon$, which implies that $t < \frac{D_{KL}(p^*||p^{0})}{\varepsilon} < \frac{\log(m)}{\varepsilon}$ matching the previous upper bound (see Corollary 1 in \cite{Arimoto1972AnAF} and \cite{SutterSEL15}) of the analysis for the Arimoto-Blahut algorithm to reach an $\varepsilon$-optimal solution.\\ 

Having the previous Theorem, we improve the analysis of the bound of the rate of convergence to achieve an $\varepsilon$-optimal solution, and this is the main result of our paper.

\begin{theorem}\label{thm:rate of convergence}
Let $p^0 \in \Delta^{m-1}$ be the full support uniform initial distribution. Then, the rate of convergence of the Arimoto-Blahut algorithm is inverse exponential $O\left(\frac{\log(m)}{c^t} \right)$, with $c>1$ be a constant, until the algorithm reaches an $\varepsilon$-optimal solution, with $\varepsilon$ a sufficiently small constant\footnote{We can also consider $\varepsilon$ as a function of the size of the problem $(m,n)$, for instance, $\varepsilon=\frac{1}{poly(m,n)}$, since in the analysis of the complexity we can consider that $(m,n)\to \infty$, but for a specific realization of the problem we treat $(m,n)$ as fixed constants even if they are arbitrarily large.}. 
\end{theorem}

\begin{proof}
Let an arbitrary iteration $t^*$ such that $f(p^{t^*})> \varepsilon$, then by (\ref{KL inequality}), for any $t \in [0,t^*]$, we have that
 \begin{equation}
\begin{split}
D_{KL}(p^*||p^{t+1}) < D_{KL}(p^*||p^{t}) -\varepsilon.
\end{split}
 \end{equation}

\noindent This implies that there is an $\eta>1$ such that
\[
    D_{KL}(p^* || p^{t+1}) =   D_{KL}(p^* || p^{t}) - \eta \cdot \varepsilon = (1 - b_{t}) \cdot D_{KL}(p^* || p^{t}),
\] 
with $b_{t} =  \frac{\eta \cdot \varepsilon} {D_{KL}(p^* || p^{t})}\geq \frac{\varepsilon} {\log (m) }$. 
So we can argue that, since both $\varepsilon$ and $\log (m)$ are constants, it holds $b_{t} \geq \zeta_{t}$, where $\zeta_{t}$ is a positive constant independent of $t$. This implies that for any iteration $t$, we have that
\begin{equation*}
\begin{split}
    D_{KL}(p^* || p^{t+1}) = (1-b_{t-1})\cdot D_{KL}(p^* || p^{t}) \leq (1-\zeta_{t-1}) \cdot D_{KL}(p^* || p^{t})\leq (1-\zeta) \cdot D_{KL}(p^* || p^{t}),
\end{split}
\end{equation*}
where $\zeta = \min_{t < t^*} \{\zeta_t\} \in (0,  1)$. Thus, this inequality recursively for any previous iteration until the initial implies that 
\begin{equation*}
\begin{split}
    D_{KL}(p^* || p^{t+1}) \leq (1-\zeta)^t \cdot D_{KL}(p^* || p^{0}) \leq (1-\zeta)^t \cdot \log (m),  
\end{split}
\end{equation*}
since $D_{KL}(p^*||p^{0}) \leq \log (m)$.
Now, by (\ref{bound of f}) we have that 
\begin{equation*}
\begin{split}
    f(p^{t+1}) \leq D_{KL}(p^* || p^{t+1}) -D_{KL}(p^* || p^{t+2}) \leq (1-\zeta)^t \cdot \log (m)=\frac{\log(m)}{c^t},  
\end{split}
\end{equation*}
since $D_{KL}(p^* || p^{t+2}) \geq 0$ and
$c = 1/(1-\zeta) > 1$ be a constant. Note that this rate implies at most $O\Big(\log_c \left(\frac{\log(m)}{\varepsilon}\right) \Big)$ iterations and $O\Big(m n \cdot \log_c \left(\frac{\log(m)}{\varepsilon}\right) \Big)$ total complexity, but by our analysis we can see that at worst case $c = \frac{1}{1-\zeta}  = \frac{\log m}{\log m - \varepsilon}$, which is $m$-dependent.
\end{proof}

\begin{figure}
    \centering
    \subfigure[Convergence to $\varepsilon$-optimal solution.]{\label{fig:thm4}\includegraphics[scale=.45]{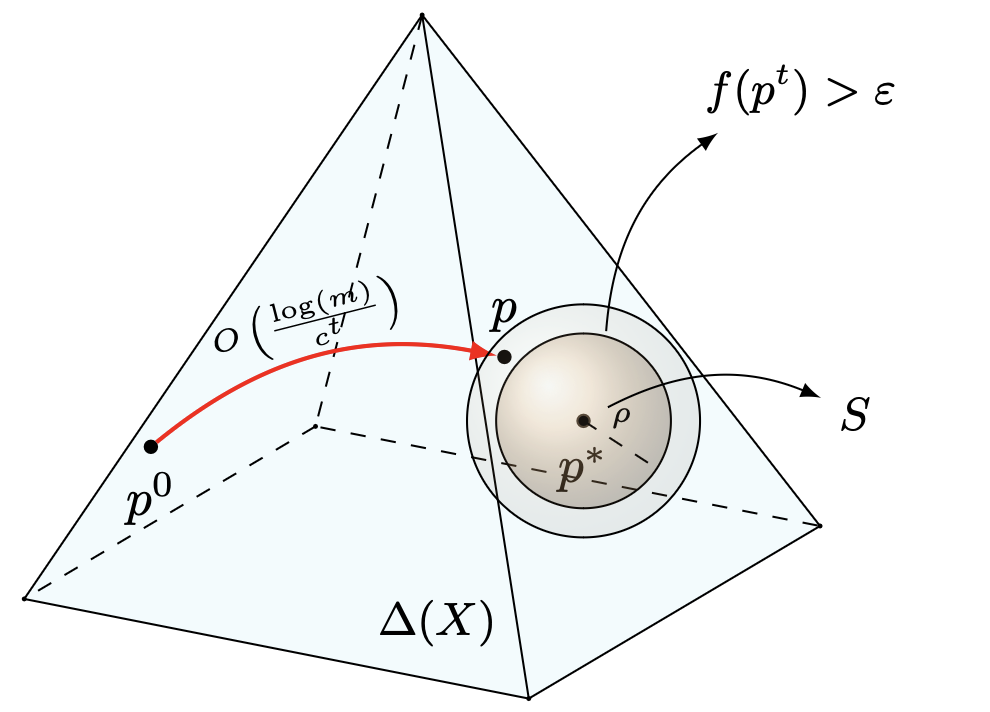}}
    \subfigure[Convergence to the interior of $S$.]{\label{fig:thm5}\includegraphics[scale=.45]{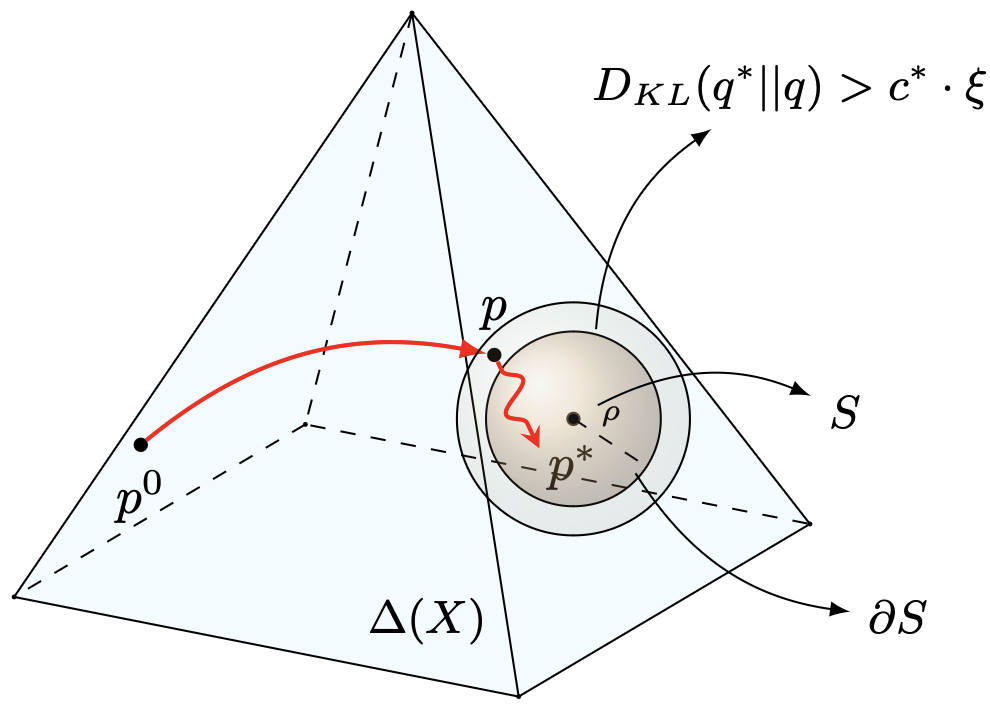}}
    \caption{Schematic representation for the convergence of the Arimoto-Blahut algorithm.}\label{fig:Convergence of AB}
\end{figure}

\section{Rate of convergence for exact solutions}
\label{Exact solutions}
We begin this section by recalling that an $m-1$-dimensional simplex $\Delta^{m-1}$ is the convex hull of $m$ vertices, and its interior consists of exactly those points that can be expressed as a convex combination of its vertices, where all coefficients are strictly between $0$ and $1$. Now, let $S$ be any convex subset of $\mathbb{R}^m$ that intersects the $\interior \Delta^{m-1}$. Therefore, $S$ contains points with strictly positive coefficients and its interior is non-empty. In fact, if $S$ is the set of capacity-achieving distributions, these are optimal solutions, there exists a $x_0 \in S$ and a radius $\rho > 0$ such that a maximum possible ball $B(x_0, \rho)$ is contained entirely in $S$, which means every point in that ball is an optimal solution.

Next, assuming that this radius $\rho$ is fixed, we will derive an explicit rate at which the Arimoto-Blahut algorithm's iterates converge to an exact optimal solution. Before doing so, we state and prove a few auxiliary properties that will be used in the subsequent analysis.

\begin{fact}
The KL divergence $D_{KL}(q^*|| q)$, with $(q^*)^{\top} = (p^*)^{\top} \cdot W$ and $q^{\top} = p^{\top} \cdot W$, is a convex function of $q$.
\end{fact}

\begin{lemma}
\label{lemma:boundary}
Consider the simplex $\Delta^{m-1}$ and the convex set of the optimal solutions $S$, with $B(p^*, \rho) \in S$ be the maximum possible ball in $S$, with center $p^*$ and constant radius $\rho$. Then, there exists a constant $\xi$ such that $\xi = \max \{D_{KL}(q^*|| q) \mid \text{ for any }p \in B(p^*, \rho)\}$, with $q^{\top} = p^{\top}\cdot W$, which is achieved on the boundary.  
\end{lemma}
\begin{proof}
The $B(p^*,\rho)$ is by definition a closed set, and since it is a subset of $\Delta^{m-1}$ simplex, it is also bounded. Therefore it is a compact set, and convex by construction.

Furthermore, since any $p \in B(p^*,\rho)$ is a full support distribution, any $q^{\top} = p^{\top}\cdot W$ is also a full support distribution. It follows that $D_{KL}(q^*|| q)$ is finite and continuous in $q$. 

A continuous function on a compact set attains its maximum, so $\xi = \max \{D_{KL}(q^*|| q) \mid \text{ for any }p \in B(p^*, \rho)\}$ exists. Finally, since $D_{KL}(q^*|| q)$ is a convex function, its maximum over the convex and compact set $B(p^*,\rho)$ occurs on the boundary. 
\end{proof}

\noindent By the previous Lemma, we have the following straightforward result.

\begin{corollary}
\label{corollary 1}
Under the assumptions of Lemma \ref{lemma:boundary}, if $D_{KL}(q^*|| q)<\xi$, for some $q,p$ s.t. $q^{\top} = p^{\top} \cdot W$, then $p$ is an optimal solution in $B(p^*,\rho)$.
\end{corollary}

\begin{theorem}\label{thm: converge to exact solution in the interior}
Consider the simplex $\Delta^{m-1}$ and the convex set of the optimal solutions $S$, with $B(p^*, \rho) \in S$ be the maximum ball in $S$, with center $p^*$ and constant radius $\rho$. Then, applying the Arimoto-Blahut algorithm with a full support uniform initial distribution $p^0$, the sequence $\{p^t\}$ converges to an exact optimal solution in $S$ with the rate of convergence equal to $O\left(\frac{\log(m)}{c^t} \right)$, for a constant $c > 1$. 
\end{theorem}

\begin{proof}
By the Lemma \ref{lemma:boundary} and Corollary \ref{corollary 1}, there exists a finite constant $\xi$ such that 
$\xi = \max \{D_{KL}(q^*|| q) \mid \text{ for any }p \in B(p^*, \rho)\}$, with $q^{\top} = p^{\top}\cdot W$. In particular, any solution $p^t$ such that $D_{KL}(q^*|| q^t) <  \xi$ is an optimal solution, so $f(p^t) = 0$. 

Now suppose that at iteration $t$, the algorithm provides a distribution $p^t$ that is not an optimal solution (i.e., $f(p^t) > 0$) and $D_{KL}(q^*|| q^t) \geq \xi$. Since $\xi > 0$, we can choose a sufficiently small constant $c^* \in (0,1)$ that $D_{KL}(q^*|| q^t) \geq \xi > c^* \cdot \xi$. \\

\noindent By the equality \eqref{Difference of KL- f} we have that 
\begin{equation*}
\begin{split}
D_{KL}(p^* || p^{t+1}) - D_{KL}(p^* || p^{t}) = - D_{KL}(q^*||q^t) - f(p^t).
\end{split}
\end{equation*}
Since $f(p^t) \geq 0$ and $D_{KL}(q^*|| q^t)> c^* \cdot \xi$, it follows that
\begin{equation*}
\begin{split}
D_{KL}(p^* || p^{t+1}) - D_{KL}(p^* || p^{t}) <- c^*\cdot \xi.
\end{split}
\end{equation*}
Picking $\eta > c^*$ and rearranging terms in the previous inequality we obtain
\begin{equation*}
\begin{split}
D_{KL}(p^* || p^{t+1})  =  D_{KL}(p^* || p^{t})- \eta \cdot \xi = (1-b_{t}) \cdot D_{KL}(p^* || p^{t}),
\end{split}
\end{equation*}
with $b_{t} =  \frac{\eta \xi} {D_{KL}(p^* || p^{t})} \geq \frac{c^* \xi} {\log (m) }$. 
So, similar to the proof of Theorem \ref{thm:rate of convergence} we can argue that, since $c^*,\xi$ and $\log (m)$ are constants, it holds $b_{t} \geq \zeta_{t}$, where $\zeta_{t}$ is a constant independent of $t$. This implies that for any iteration $t$, we have that
\begin{equation*}
\begin{split}
    D_{KL}(p^* || p^{t+1}) = (1-b_{t-1})\cdot D_{KL}(p^* || p^{t}) \leq (1-\zeta_{t-1}) \cdot D_{KL}(p^* || p^{t})\leq (1-\zeta) \cdot D_{KL}(p^* || p^{t}),
\end{split}
\end{equation*}
where $\zeta = \min_{t < t^*} \{\zeta_t\} \in (0,  1)$. Thus, by backward recursion we have
\begin{equation*}
\begin{split}
    D_{KL}(p^* || p^{t+1}) \leq (1-\zeta)^t \cdot D_{KL}(p^* || p^{0}) \leq (1-\zeta)^t \cdot \log (m),  
\end{split}
\end{equation*}
since $D_{KL}(p^*||p^{0}) \leq \log (m)$.
Finally, using the bound (\ref{bound of f}) and $D_{KL}(p^* || p^{t+2}) \geq 0$, we conclude
\begin{equation*}
\begin{split}
    f(p^{t+1}) \leq D_{KL}(p^* || p^{t+1}) -D_{KL}(p^* || p^{t+2}) \leq (1-\zeta)^t \cdot \log (m)=\frac{\log(m)}{c^t},  
\end{split}
\end{equation*}
where $c = 1/(1-\zeta) > 1$ is a constant. This establishes the claim.

This result implies that as far as we are not in an optimal solution $p^t$, then the sequence converges to an optimal solution within $S$ with inverse exponential rate. Finally, utilizing this inverse exponential bound, in a similar manner as in the proof of Theorem \ref{thm:rate of convergence},
we achieve the same bounds for the number of iterations, $O\left(\log_c \left(\frac{\log(m)}{\varepsilon}\right)\right)$, and total complexity, $O\left(m n \cdot \log_c \left(\frac{\log(m)}{\varepsilon}\right)\right)$, for some constant $\varepsilon>0$, to achieve a solution\footnote{Not necessary the solution $p^*$, since when the algorithm achieves an optimal solution it stops.} in $S$.
\end{proof}

\noindent The situation described in Theorem \ref{thm: converge to exact solution in the interior} is illustrated in Figure \ref{fig:thm5}.\\ 

By Theorem \ref{thm:Unique solution in interior} and Theorem \ref{thm:Nakagawa2020 m_II empty} an inverse rate of convergence for exact solutions is obtained, but only under rather restrictive conditions.
In contrast, the analysis of Section \ref{Approximation analysis} allows us to drop these conditions and still maintain exponential convergence. We begin with auxiliary corollaries.\\

\begin{corollary}[Implied by Theorem \ref{thm:rate of convergence}]
\label{Corollary:3}
For any $\delta>0$, there exists a sufficiently small constant $\varepsilon > 0$ and an iteration $t^0$ of the algorithm s.t. $D\left(p^* || p^{t^0}\right) < \delta$. Moreover, for any iteration $t < t^0$, the rate of convergence is inverse exponential $O\left(\frac{\log m}{c^t}\right)$, where $c > 1$ is a constant depending on $\varepsilon$ as provided by the Theorem \ref{thm:rate of convergence}.\\
\end{corollary}

\begin{corollary}[Implied by Theorem \ref{thm:rate of convergence}]
\label{Corollary:4}
For any $\delta > 0$, there exists a sufficiently small constant $\varepsilon > 0$ and an iteration $t^0$ of the algorithm s.t. $\|p^{t^0} -p^*\| < \delta$. Furthermore, for any iteration $t < t^0$, the rate of convergence is inverse exponential $O\left(\frac{\log m}{c^t}\right)$, where $c>1$ is a constant depending on $\varepsilon$ as provided by the proof\footnote{Note that Theorem \ref{thm:rate of convergence} quantifies convergence in terms of the KL divergence (and the decrement $f(p)$), but  by invoking Pinsker’s inequality one immediately recovers the same exponential rate in the distance norm between the current iterate and the optimum.} of Theorem \ref{thm:rate of convergence}.\\
\end{corollary}

\noindent Combining these and the results of Theorem \ref{thm:Unique solution in interior} and Theorem \ref{thm:Nakagawa2020 m_II empty} we have the following results.

\begin{theorem}\label{thm:new Nakagawa}
Consider a unique optimal solution $p^* \in \interior \Delta^{m - 1}$, then, starting from a full support uniform distribution, the Arimoto-Blahut algorithm has inverse exponential rate of convergence to achieve $p^*$.
\end{theorem}

\begin{proof}
By the Corollary \ref{Corollary:3}, for any $\delta>0$ in the proof of Theorem 3 in \cite{Arimoto1972AnAF}, there is sufficiently small constant an $\varepsilon$ s.t. $D\left(p^* || p^{t^0}\right)< \delta$, and for any $t < t^0$, the rate of convergence is inverse exponential. After iteration $t^0$ the proof of Theorem 3 in \cite{Arimoto1972AnAF} is applied, giving an inverse rate of convergence for any $t$ until we reach the optimal solution.

\end{proof}

\begin{theorem}\label{thm:new AB}
Consider a unique optimal solution $p^*$ with $m_{II} = \emptyset$, then, starting from a full support uniform distribution, the Arimoto-Blahut algorithm has inverse exponential rate of convergence to achieve $p^*$.
\end{theorem}

\begin{proof}
By the Corollary \ref{Corollary:4}, for any $\delta > 0$ in the proof of Theorem 5 in \cite{Nakagawa2020AnalysisOT}, there is sufficiently small constant an $\varepsilon$ s.t. $\| p^{t^0}-p^*\|< \delta$, and for any $t < t^0$, the rate of convergence is inverse exponential. After iteration $t^0$ the proof of Theorem 5 in \cite{Nakagawa2020AnalysisOT}, is applied, giving an inverse rate of convergence for any $t$ until we reach the optimal solution.
\end{proof}

\section{Discussion}\label{discussion}

In this paper, we revisited the classical Arimoto-Blahut algorithm regarding the channel capacity problem and introduced significant analysis improvements that confirm its computational efficiency. Our new results provide mathematical guarantees for achieving an $\varepsilon$-optimal solution, for any constant $\varepsilon>0$, in substantially fewer iterations than previously established in the literature. By scrutinizing the convergence analysis, we demonstrated a reduced overall complexity, indicating that the Arimoto-Blahut algorithm can indeed be practical for large-scale applications. These theoretical advancements contribute to a deeper understanding of the algorithm's performance and have implications for a range of applications where the Arimoto-Blahut algorithm, or similar algorithms are employed, such as in rate-distortion theory and the Expectation-Maximization algorithm.

Furthermore, our analysis improves upon the results in \cite{Nakagawa2020AnalysisOT} and \cite{SutterSEL15}. Specifically, our findings suggest that tighter Jacobian bounds may be obtained using Fisher information metric instead of $\ell_2$ for the former, and that adaptive smoothing should be introduced in the dual algorithm for the latter.

Finally, our work opens up several promising directions for future research. These include establishing lower bounds on the iteration complexity of the algorithm and extending the analysis to channels with continuous input alphabets. We believe that the techniques developed in this paper may also prove useful in the analysis of other algorithms in Information Theory and TCS.

\section*{Acknowledgments}

For this work the first and the second author were partially supported by the LASCON project of ICS-FORTH's internal grants 2022.

\printbibliography

\end{document}